\newtheorem{thm}{Theorem}[section]
\theoremstyle{remark}
\newtheorem{rem}{Remark}[section]
\newcommand{\vecq}{\mathbf{q}}
\newcommand{\vecw}{\mathbf{w}}
\newcommand{\vecv}{\mathbf{v}}
\newcommand{\vecI}{\mathbf{I}}
\title{
Intrinsic Properties of Conservation-dissipation Formalism of Irreversible Thermodynamics}
\author{
Wen-An Yong\thanks{E-mail: wayong@tsinghua.edu.cn}\\
\small{\textit{Department of Mathematical Sciences}}\\
\small{\textit{Tsinghua University, Beijing 100084, P.R.China}}}
\date{\small{Dedicated to Heinrich Freist\"uhler on the occasion of his 60th birthday}}
\begin{document}
\maketitle{}

\begin{abstract}
This paper proposes four fundamental requirements for establishing PDEs (partial differential equations) modeling irreversible processes. We show that the PDEs derived via the CDF (conservation-dissipation formalism) meet all the requirements. In doing so, we find useful constraints on the freedoms of CDF and point out that a shortcoming of the formalism can be remedied with the help of the Maxwell iteration. It is proved that the iteration preserves the gradient structure and strong dissipativeness of the CDF-based PDEs.
A refined formulation of the second law of thermodynamics is given to characterize the strong dissipativeness, while the gradient structure corresponds to nonlinear Onsager relations. Further advantages and limitations of CDF will also be presented.
\end{abstract}

%

\section{Introduction}

As a branch of macroscopic physics, thermodynamics is a theory studying relationships among various apparently unrelated variables or parameters (extensive and intensive) characterizing a thermodynamic system \cite{Ca}. In classical or equilibrium thermodynamics, the parameters (e.g. temperature, energy, volume, pressure, entropy) are time--space-independent constants for a given system (thermostatics) and the main mathematical tool is multivariable calculus. In non-equilibrium situations, the variables are not constant anymore, but functions of time or space. Thus, the most natural form of the expected relations would be of evolution (partial) differential equations (PDEs).

To see what the expected PDEs look like, we follow de Groot and Mazur \cite{GM} and consider the motion of a one-component fluid. By the first law of thermodynamics, the motion obeys the conservation law of energy. Together with the Galilean invariance, the energy law implies the conservation laws of mass and momentum
\cite{EIT_book}, which are usually expressed as
\begin{equation}\label{11}
\begin{array}{rlrl}
\partial_t\rho  + \nabla\cdot(\rho\vecv) & = & 0, \\[3mm]
\partial_t(\rho \vecv) + \nabla\cdot(\rho \vecv\otimes \vecv + {\bf P}) & = & 0, \\[3mm]
\partial_t(\rho E) + \nabla\cdot(\rho\vecv E+ {\bf P}\vecv + \vecq) & = & 0
\end{array}
\end{equation}
for electrically neutral fluids without external forces.
Here $\rho$ is the fluid density, $\vecv$ is the velocity, ${\bf P}$ is the pressure tensor, $E = e + |\vecv|^2/2$ with $e$ the specific internal energy, and $\vecq$ represents the heat flux.
Thus, in 3D we have 5 evolution PDEs for 17 unknowns $\rho, \vecv, e, {\bf P}$ and $\vecq$ (when ${\bf P}$ is not symmetric). They are not closed.
Notice that only first-order time and spatial partial derivatives appear in (\ref{11}).

With fluid flows in mind, we may consider a general physical process and assume that its conservation laws are
\begin{equation}\label{12}
\partial_t u+ \sum_{j=1}^3 \partial_{x_j} f_j=0.
\end{equation}
Here $u=u(t,x)\in\mathbb{R}^n$ represents conserved variables like $u =(\rho, \rho\vecv, \rho E)$ in (\ref{11}),
$x=(x_1,x_2,x_3)$, and $f_j$ is the corresponding flux along the $x_j$-direction.
If each $f_j$ is given in terms of the conserved variables, the system (\ref{12}) becomes closed. In this case, the system is considered to be in local equilibrium and $u$ is also referred to as equilibrium variables. However, very often and as in (\ref{11}) $f_j$ depends on some extra variables in addition to the conserved ones. The extra variables characterize non-equilibrium features of the system under consideration and are called non-equilibrium or dissipative variables. Choosing suitable non-equilibrium variables and determining their evolution equations are the fundamental task of irreversible thermodynamics.

There have been no well-accepted rules for choosing the non-equilibrium variables and determining the evolution equations. So far irreversible thermodynamics has had a number of different theories, such as CIT (Classical Irreversible Thermodynamics, \cite{GM}), RT (Rational Thermodynamics, \cite{Truesdell_book}),
IVT (Internal Variables Thermodynamics, \cite{Muschik3}),
RET (Rational Extended Thermodynamics, \cite{Muller_book}), EIT (Extended Irreversible Thermodynamics, \cite{EIT_book, EIT_book2}), GENERIC (General Equation for Non-Equilibrium Reversible-Irreversible Coupling, \cite{GO, OG, BET_book, PKG}), CDF (Conservation-dissipation Formalism, \cite{ZHYY}), EVA (Energetic Variational Approach, \cite{EVA3}),
and so on.
Each of them has its own way in choosing the non-equilibrium variables and deriving the corresponding evolution equations. Good choices are expected to lead to simple governing equations which directly reveal physical insights of the process. An exposition of different versions of continuum thermodynamics and comments can be found in \cite{Muschik3} and references cited therein.

No matter how the PDEs are derived, they should meet certain fundamental physical requirements. In this paper, I will propose four such requirements. They are the observability of physical phenomena, time-irreversibility, long-time tendency to equilibrium, and compatibility with possibly existing classical theories. The fundamentality of these requirements will be expounded in the next section. I will show that the PDEs constructed via our CDF \cite{ZHYY} meet all the four requirements.

Based on the first and second laws of thermodynamics, CDF is a learning based method for choosing the non-equilibrium variables and deriving the corresponding evolution equations \cite{ZHYY}. It assumes that certain given conservation laws are known {\it a priori} and has two freedoms. The freedoms are problem-dependent. It will be seen that our verification of the long-time tendency and compatibility can yield substantial constraints on the freedoms. In addition, the CDF-based PDEs are all of first-order. This shortcoming can be remedied, as shown with our compatibility analysis, by applying the Maxwell iteration \cite{Muller_book} to the CDF-based first-order PDEs so that second-order PDEs can be derived systematically. Further advantages and limitations of CDF will also be presented.

It is proved that the derived second-order PDEs preserve the gradient structure and strong dissipativeness of the first-order ones.
Furthermore, a refined formulation of the second law of thermodynamics is given to characterize the strong dissipativeness. On the other hand, the gradient structure \cite{HMie} corresponds to nonlinear Onsager reciprocal relations for scalar processes (described with the first-order PDEs) and for vectorial or tensorial processes (described with the second-order PDEs) \cite{GM}.

The paper is organized as follows. Section 2 presents the four requirements. CDF is reviewed in Section 3 and illustrated with examples in Section 4. Section 5 is devoted to the long-time tendency and compatibility analysis. Further comments are given in Section 6. The final section summarizes the main conclusions of this paper.

\section{Fundamental requirements}
\setcounter{equation}{0}

In this section, I propose four requirements in closing a given system of conservation laws (\ref{12}). They are the observability of physical phenomena, time-irreversibility, long-time tendency to equilibrium, and compatibility with possibly existing well-validated theories.
The fundamentality of these requirements can be expounded as follows.

The first two of these requirements are quite natural because only observable and irreversible phenomena are of our interest in non-equilibrium thermodynamics.
Observable phenomena are insensitive to small changes of experimental conditions.
The physically correct mathematical description should produce results which are robust against any disturbance of initial or boundary data. Mathematically, the robustness usually corresponds to stability of PDEs with respect to initial or boundary data. For first-order evolution PDEs, hyperbolicity is believed to be necessary for the stability \cite{serre}.

To reflect the time-irreversibility, the expected evolution PDEs should not be time-invariant. This is quite different from the Hamilton's equations in classical mechanics. Actually, irreversible terms have been inherently incorporated into non-equilibrium thermodynamics \cite{GM}. Moreover, such an irreversibility should be related closely to the second law of thermodynamics.

The third one just aims at matching the empirical hypothesis of classical thermodynamics \cite{Ca}, meaning that any thermodynamic system will tend to equilibrium after a long enough time.
Here I would like to make a comment on the term "long time". Physically, a long time is never an infinitely long time, because no one is able to look at his experimental result after an infinitely long time. Indeed, when we use the term "long time", what we really mean is a relatively long time. Thus, fast/slow time scales are physically proper notion and should be used in studying long-time behaviors of a thermodynamic system.

The last one is Bohr's Correspondence Principle. It seems indispensable for any generalization of a classical theory to be reasonable. We do not live at Newton's era. For many problems, there are already well-validated and well-accepted theories. New theories are wanted because more details are needed to be understood and more refined experiments are available. The compatibility means that the new theories should be compatible with the well-validated and classical ones. By the way, this requirement is also a part of Hilbert's sixth problem ("The investigations on the foundations of geometry suggest the problem: To treat in the same manner, by means of axioms, those physical sciences in which mathematics plays an important part; ... Further, the mathematician has the duty to test exactly in each instance whether the new axioms are compatible with the previous ones. ...").

\section{Conservation-dissipation formalism}
\setcounter{equation}{0}

Here we review the conservation-dissipation formalism (CDF) developed in \cite{ZHYY} to close the given conservation laws (\ref{12}). This CDF was inspired by the observation in \cite{Y4} that many well-validated PDEs modeling different physical processes all respect the {\it conservation-dissipation principle} proposed therein. The processes include but not limited to chemically reactive flows, kinetic theories (moment closure systems, discrete-velocity kinetic models),
nonlinear optics,
radiation hydrodynamics, traffic flows, thermal non-equilibrium flows \cite{VK}, dissipative relativistic fluid flows \cite{GL1},
axonal transport \cite{Yanhao},
chemical reactions \cite{Y5}, geophysical flows \cite{Jiawei}, and
compressible viscoelastic flows \cite{Y6, Sader}. See \cite{H2, H3} for recent applications of CDF to soft matter physics.
Thanks to the observation above, it is natural to require the closed PDEs to have the same form and obey the same principle.

With the above simple idea in mind,
we choose a dissipative variable $v\in\mathbb{R}^r$ so that the flux $f_j$ in (\ref{12}) can be expressed as
$ f_j =f_j(u, v)$ and seek evolution equations of the form
\begin{equation}\label{31}
 \partial_tv+\sum_{j=1}^3 \partial_{x_j}g_j(u,v)=q(u,v).
\end{equation}
This is our constitutive equation to be determined, where $g_j(u,v)$ is the corresponding flux and $q=q(u,v)$ is the nonzero source, vanishing at equilibrium. Together with the conservation laws ({\ref{12}), the
evolution of a non-equilibrium state
is then governed by a system of first-order PDEs in the compact form
\begin{equation}\label{32}
  \partial_tU+\sum_{j=1}^3 \partial_{x_j}F_j(U)=Q(U),
\end{equation}
where
$$
U=\left(\begin{array}{c}u\\v\end{array}\right),\quad F_j(U)=\left(\begin{array}{c}f_j(U)\\g_j(U)\end{array}\right),\quad Q(U)=\left(\begin{array}{c}0\\q(U)\end{array}\right).
$$
Note that not every thermodynamic variable can evolve in such a balance form, while
many classical systems allow such a set of state variables \cite{VK, GL1, Y4}.

For balance laws (\ref{32}), the aforesaid conservation-dissipation principle consists of the following two conditions.
\begin{enumerate}
\item[(i).] There is a strictly concave smooth function $\eta=\eta(U)$, called entropy (density), such that the matrix product $\eta_{UU}F_{jU}(U)$ is symmetric for each $j$ and for all $U=(u, v)$ under consideration.
\item[(ii).] There is a positive definite matrix $M=M(U)$, called dissipation matrix, such that the non-zero source can be written as  $q(U)=M(U)\eta_v(U).$
\end{enumerate}
Here the subscript stands for the corresponding partial derivative, for instance $\eta_v =\frac{\partial \eta}{\partial v}$ and $\eta_{UU}=\frac{\partial^2 \eta}{\partial U^2}$, and $\eta_v(U)$ should be understood as a column vector. Note that the dissipation matrix is not assumed to be symmetric and its positive definiteness means that of the symmetric part $ \frac{M + M^T}{2}. $
This principle is a strengthened version of the {\it structural stability conditions} proposed in \cite{Y1, Y2, Y3} for hyperbolic systems of PDEs with relaxation. It is strictly stronger than the structure condition of RET \cite{Muller_book}, on which further comments can be found in \cite{Y2, YHB}. In particular, the structure condition of RET does not imply the entropy-production inequality (\ref{35}) below.

Balance laws (\ref{32}) together with the conservation-dissipation principle will be referred to as {\it conservation-dissipation formalism} (CDF).
This formalism has two freedoms: the entropy function $\eta=\eta(U)$ and the dissipation matrix $M=M(U)$. They are both functions of the state variable $U$. The former is strictly concave and the latter is positive definite. Except these, no further restriction is imposed on $\eta=\eta(U)$ or $M=M(U)$.
Specific expressions of $\eta(U)$ and $M(U)$ should be problem-dependent.

Here are some simple comments on the conservation-dissipation principle. Condition (i) is the well-known entropy condition for hyperbolic conservation laws \cite{Go, FL_71}. It corresponds to the classical thermodynamics stability. This condition ensures that the first-order system (\ref{32}) is globally symmetrizable hyperbolic and thereby well-posed \cite{Dafermos}. It implies that there is a function $J_j=J_j(U)$ such that
\begin{equation}\label{33}
\eta_U\cdot F_{jU}=J_{jU}.
\end{equation}
This imposes a restriction on the flux $g_j(u,v)$. Moreover, we use equations (\ref{32}) and (\ref{33}) to compute the rate of change of entropy:
\begin{equation}\label{34}
\begin{array}{rl}
\partial_t \eta&=-\sum_{j=1}^3\eta_U \cdot\partial_{x_j}F_j+\eta_v\cdot q\nonumber\\[3mm]
&=-\sum_{j=1}^3\partial_{x_j}J_j+\sigma
\end{array}
\end{equation}
with the entropy production $\sigma=\eta_v\cdot M(U)\eta_v\ge0.$ Here Condition (ii) has been used. Thus, the second law of thermodynamic and thereby the time-irreversibility are respected automatically by the CDF-based system (\ref{32}).

Condition (ii) is a nonlinearization of the celebrated Onsager reciprocal relation for scalar processes \cite{GM}. Indeed, with Condition (ii) the source term in (\ref{32}) can be rewritten as
$$
Q(U) =\mathcal{M}(U) \eta_U(U), \qquad \mbox{with} \quad \mathcal{M}(U) =\left(\begin{array}{cc}0& 0\\ 0& M(U)\end{array}\right).
$$
Therefore, the balance laws possess the gradient structure \cite{HMie} and can rewritten as
$$
\partial_tU+\sum_{j=1}^3 \partial_{x_j}F_j(U)=\mathcal{M}(U) \eta_U(U).
$$
This form relates the change of entropy $\eta_U(U)$ directly to the various irreversible processes that occur in a system \cite{GM}. In other words, the change of entropy
drives the irreversible processes, while the latter lead to the change of entropy.

Remark that the dissipation matrix may depend on the non-equilibrium variables as well as the conserved ones, while the usual Onsager relation only allows the dependence on the conserved variables. It was shown in \cite{Y4} why the dissipation matrix $M=M(U)$ must be positive instead of semi-positive definite. In fact, this positive definiteness guarantees that $\eta_v(u, v)=0$ whenever $q(u,v)=0$. This means that the local equilibrium states are those attaining the maximum of the entropy with respect to the non-equilibrium variable.

We conclude this section with the {\it entropy-production inequality} proposed in \cite{Y3}, which follows from the conservation-dissipation principle. Look at the entropy production in (\ref{34}): $\sigma = \eta_v(U)\cdot M(U)\eta_v(U),$
which is the inner product of the thermodynamic fluxes $Q(U)$ and (entropic) forces $\eta_U(U)$.
It is direct to see that
\begin{equation}\label{35}
\begin{array}{rl}
\sigma=& \eta_U(U)\cdot Q(U)= \eta_v(U)\cdot M(U)\eta_v(U)\\[3mm]
\geq & \lambda(U)|\eta_v(U)|^2 \geq \frac{\lambda(U)}{|M(U)|^2}|M(U)\eta_v(U)|^2 = \frac{\lambda(U)}{|M(U)|^2}|Q(U)|^2,
\end{array}
\end{equation}
where $\lambda(U)$ is the smallest eigenvalue of the
symmetric
positive-definite matrix $[M(U)+ M^T(U)]/2$ and $|\cdot|$ denotes the usual 2-norm of vectors or matrices.
Namely, the conservation-dissipation principle implies the entropy-production inequality (\ref{35})---a
refined formulation of the second law of thermodynamics.

\section{Examples}
\setcounter{equation}{0}

In this section, we illustrate with examples how CDF guides us to choose the dissipative variable $v$ and to determine the corresponding fluxes $g_j(u,v)$.
The interested reader is referred to \cite{ZHYY} for heat conduction in rigid bodies and the doctoral thesis \cite{Yang} for multi-component fluid mixtures.

(4A). The first example is about fluid flows through porous media. The process obeys the conservation law of mass:
\begin{equation}\label{41}
\partial_t\rho +\nabla\cdot(\rho\vecv)=0.
\end{equation}
This equation is not closed since the fluid velocity $\vecv$ is unknown. Our task is to close this equation.

To this end, we introduce a non-equilibrium state variable $\vecw$ with the size of the fluid velocity $\vecv$, but not necessarily the unknown velocity. Thus, the system is characterized with the state variable $(\rho, \rho\vecw)$. This choice of the state variable, instead of $(\rho, \vecw)$, is inspired by the classical calculations \cite{GM}. See also the following calculations.

Then we specify a strictly concave function $\eta=\eta(\rho, \rho\vecw)$ as an entropy density in the CDF. In order to compare with the classical calculations \cite{GM}, we define a generalized specific entropy
$$
s = s(\nu, \vecw) = \nu \eta(1/\nu, \vecw/\nu)
$$
depending on the specific volume $\nu = 1/\rho$ and the non-equilibrium variable $\vecw$.
Recall that an isothermal system in equilibrium usually has a specific entropy $s_0 = s_0(\nu)$.
It is standard to show that the concavity of $\eta= \eta(\rho, \rho\vecw)$ is equivalent to that of $s=s(\nu, \vecw)$.
In what follows, we will use $s$ instead of $\eta$.

As in \cite{ZHYY}, we introduce a differential operator ${\mathcal D}$ acting on a function $f=f(x, t)$ as  ${\mathcal D}f = \partial_t(\rho f) + \nabla\cdot(\rho\vecv f)$.
Thanks to the continuity equation (\ref{41}), it is immediate to see that ${\mathcal D}f = \rho(\partial_tf + \vecv\cdot\nabla f)$ and thereby is Galilean invariant.
In order to be consistent with the equilibrium thermodynamics, we define the non-equilibrium (isothermal) pressure $\pi$ with
$$
\pi = s_\nu(\nu, \vecw).
$$
Then we use the conservation law (\ref{41}) and the generalized Gibbs relation
$$
ds = \pi d\nu + s_\vecw\cdot d\vecw
$$
to compute the rate of change of entropy:
$$
\begin{array}{rl}
{\mathcal D} s =&\pi\nabla\cdot\vecv + s_\vecw\cdot{\mathcal D}\vecw\nonumber\\[3mm]
= & \nabla\cdot(\pi\vecv)-\vecv\cdot\nabla \pi +s_\vecw\cdot{\mathcal D} \vecw\nonumber\\[3mm]
\equiv&-\nabla\cdot \mathbf{J}+\sigma.
\end{array}
$$
Here $\mathbf{J}=-\pi\vecv$ and $\sigma=s_\vecw\cdot{\mathcal D} \vecw - \vecv\cdot\nabla\pi$ is the entropy production.
With this expression of the entropy production, our CDF suggests to choose
$$
\vecv= - s_\vecw(\nu, \vecw)
$$
and
\begin{equation}\label{42}
{\mathcal D} \vecw +\nabla\pi \equiv \partial_t(\rho\vecw) + \nabla\cdot(\rho\vecv\otimes\vecw)+\nabla\pi = -\mathbf{M}(\rho,\rho\vecw)\vecv,
\end{equation}
where $\mathbf{M}=\mathbf{M}(\rho, \rho\vecw)$ is a positive definite matrix. Note that this equation is not Galilean invariant due to
the dissipation proportional to the velocity field $\vecv$.
Equations (\ref{41}) and (\ref{42}) together form a system of first-order PDEs in the form (\ref{32}) with
$$
U= \left(\begin{array}{c}
   \rho\\
  \rho\vecw
  \end{array}\right), \ \,
  \sum_j \partial_{x_j}F_j(U)=\nabla\cdot\left(\begin{array}{c}
 -\rho s_\vecw\\
 -\rho s_\vecw\otimes\vecw + \pi\vecI
      \end{array}\right),\ \,Q(U)= \left(\begin{array}{c}
    0\\
    \mathbf{M}s_\vecw
   \end{array}\right),
$$
where $\vecI$ is the 3$\times$3 identity matrix.

From the above procedure, we see that the non-equilibrium variable $\vecw$ is conjugated to the fluid velocity $\vecv$ with respect to the pre-specified entropy. Thanks to the strict concavity of $s=s(\nu, \vecw)$,
the non-equilibrium variable $\vecw$ can be globally expressed in terms of $\rho$ and $\vecv$ \cite{FL_71}.

At this point, let us mention that the use of such conjugate variables are traditional in thermodynamics \cite{Ca}. Their importance was further elaborated in \cite{RS} and were called main field therein.

For the freedoms, a simple choice is
$$
s(\nu,\vecw)=s_0(\nu) - \frac{|\vecw|^2}{2\alpha}, \quad \mathbf{M}=\frac{\alpha}{\lambda\varepsilon}\vecI .
$$
Here $s_0(\nu)$ is the equilibrium entropy, $\alpha$ is a positive constant with unit kg/K, $\lambda$ is a positive constant with unit $m^3$/kg, and $\varepsilon$ is a positive constant related to the relaxation time. With this choice, we have $\vecv = \vecw/\alpha$ and equation (\ref{42}) reduces to the classical momentum equation with damping:
$$
\partial_t (\rho\vecv) + \nabla\cdot(\rho\vecv\otimes\vecv) + \nabla(\frac{\pi}{\alpha}) = -\frac{1}{\lambda\varepsilon}\vecv.
$$
Thus, equation (\ref{42}) can be regarded as a nonlinear generalization of the momentum equation. \\

(4B). Next example is from \cite{ZHYY} to close system (\ref{11}) for one-component fluid flows.
Referring to the above experience, we introduce two non-equilibrium state variables $\vecw$ and ${\bf C}$ which have the respective sizes of the vector $\vecq$ and tensor ${\bf P}$. Recall that such a system in equilibrium usually has a specific entropy $s_0=s_0(\nu, e)$
with $\nu = 1/\rho$ and $e=E-|\vecv|^2/2$ (the specific internal energy). Now we assume that the non-equilibrium system under consideration possesses a generalized specific entropy
\begin{equation}\label{43}
s = s(\nu, e, \vecw, {\bf C})
\end{equation}
depending on the non-equilibrium variables $(\vecw, {\bf C})$ as well as the classical ones $(\nu, e)$.
This specific entropy corresponds to the entropy density $\eta$ in CDF with the relation
$$
\eta=\eta(\rho,\rho\vecv,\rho E,\rho \vecw,\rho{\bf C})=\rho s(1/\rho, E-|\vecv|^2/2, \vecw, {\bf C}).
$$
Accordingly, we define the non-equilibrium temperature $\theta$ and the non-equilibrium
thermodynamic pressure $\pi$ as
\begin{equation}\label{44}
\theta^{-1} :=s_e(\nu, e, \vecw, {\bf C}), \quad \theta^{-1}\pi :=s_\nu(\nu, e, \vecw, {\bf C}).
\end{equation}
For convenience, we exempt the thermodynamic pressure $\pi$ from the stress ${\bf P}$. Namely, set
$$
{\mathcal\tau}={\bf P}-\pi\vecI
$$
which accounts for possible dissipative effects such as viscosity.
Moreover, we have the generalized Gibbs relation
$$
\hbox{d}s=\theta^{-1}\left[\pi\hbox{d}\nu +d e\right]+s_\vecw\cdot\hbox{d}\vecw+s_{\bf C}^T:\hbox{d}{\bf C}.
$$
Here the superscript $T$ denotes the transpose and the colon $:$ stands for the double contraction of two second-order tensors:  $\mathbf{A}:\mathbf{B}=\sum_{i,j}A_{ij}B_{ji}$.

Then we use the conservation laws in (\ref{11}) and the generalized Gibbs relation to compute the rate of change of entropy:
$$
\eta_t+\nabla\cdot(\vecv\eta)=-\nabla\cdot\mathbf{J}+\sigma .
$$
The details can be found in \cite{ZHYY}. Here $\mathbf{J}=\theta^{-1}\vecq$ is the entropy flux and
$$
\sigma=(s_\vecw\cdot\mathcal{D}\vecw+\vecq\cdot\nabla \theta^{-1})+(s_{\bf C}^T:\mathcal{D} {\bf C} -\theta^{-1}{\mathcal\tau}^T:\nabla\vecv)
$$
is the entropy production.
Having this expression of the entropy production, we refer to CDF and choose $\vecq= s_\vecw, ~{\mathcal\tau} = \theta s_{\bf C}$,
and
\begin{equation}\label{45}
  \left(\begin{array}{c}
\partial_t(\rho\vecw)+\nabla\cdot(\rho\vecv\otimes\vecw)+\nabla\theta^{-1}\\[3mm]
\partial_t(\rho{\bf C})+\nabla\cdot(\rho\vecv\otimes{\bf C})-\nabla\vecv
  \end{array}\right)=\mathbf{M}\cdot  \left(\begin{array}{c}
    \vecq\\[3mm]
    \theta^{-1}\mathcal{\tau}
  \end{array}\right)
\end{equation}
with $\mathbf{M}=\mathbf{M}(\rho, e, \vecw, {\bf C})$ positive definite.
Consequently, the final closed system of governing equations consists of the constitutive equations (\ref{45}) and the conservation laws in (\ref{11}).

Up to now, we have not assumed the symmetry of the stress tensor. When it is symmetric, we will take the non-equilibrium tensor ${\bf C}$ to be symmetric. In this case, the previous calculations are still valid but lead to
\begin{equation}\label{46}
  \left(\begin{array}{c}
\partial_t(\rho\vecw)+\nabla\cdot(\rho\vecv\otimes\vecw)+\nabla\theta^{-1}\\[3mm]
\partial_t(\rho{\bf C})+\nabla\cdot(\rho\vecv\otimes{\bf C})-\frac{1}{2}(\nabla\vecv+\nabla\vecv^T)
  \end{array}\right)=\mathbf{M}\cdot  \left(\begin{array}{c}
    \vecq\\[3mm]
    \theta^{-1}\mathcal{\tau}
  \end{array}\right),
\end{equation}
instead of equation (\ref{45}).

Further discussions about the fluid system obtained thus can be found in \cite{ZHYY}, including some choices of the freedoms.

\section{Long-time tendency and compatibility}
\setcounter{equation}{0}

In this section, we show that the CDF-based balance laws (\ref{32}) meet the long-time tendency and compatibility requirements, while the hyperbolicity and time-irreversibility have been shown in Section 3.

To begin with, we introduce a positive parameter $\varepsilon$ in the right-hand side of (\ref{32}):
\begin{equation}\label{51}
\partial_tU+\sum_{j=1}^3 \partial_{x_j}F_j(U)=\frac{Q(U)}{\varepsilon}.
\end{equation}
Notice that the first $n$ components of $Q(U)$ vanish. This parameter can be regarded as a relaxation time. A small $\varepsilon$ means that the dissipative variables $v$ evolve much faster than the conserved ones $u$. Namely, the time scale for $v$ to reach stationary
is much smaller than that for $u$. With the concept of relaxation time, a long-time can be simply defined as a time that is much longer than the relaxation time $\varepsilon$. Thus, the long-time tendency is equivalent to the zero relaxation-time limit where $\varepsilon$ tends to zero.\\

(5A). For small $\varepsilon$, the following result was established in \cite{Y1}. For sufficiently smooth initial data, the solution $U^\varepsilon=U^\varepsilon(x, t)$ to the CDF-based PDEs (\ref{51}) exists uniquely in an $\varepsilon$-independent time interval and has the expansion
$$
U^\varepsilon = \left( \begin{array}{l}
 u_0 \\
 v_0 \\
 \end{array} \right) + O(\varepsilon)
$$
in a certain Sobolev space. Here $u_0=u_0(x, t)$ and $v_0=c_0(x, t)$ solve the so-called equilibrium system
$$
q_v(u_0, v_0) = 0, \qquad u_{0t} + \sum\limits_{j=1}^3f_j(u_0,v_0)_{x_j} = 0.
$$
The expansion shows that, in the long time or when $\varepsilon$ is small, the CDF-based PDEs (\ref{32}) can be well replaced with the equilibrium system. Namely, the equilibrium system approximately describes the thermodynamic system under consideration.

For one-component fluid flows where the CDF-based balance laws are equations (\ref{11}) together with (\ref{45}), the equilibrium system is just the classical compressible Euler equations
$$
\begin{array}{rlrl}
\partial_t\rho  + \nabla\cdot(\rho\vecv) & = & 0, \\[3mm]
\partial_t(\rho \vecv) + \nabla\cdot(\rho \vecv\otimes \vecv +\pi\vecI) & = & 0, \\[3mm]
\partial_t(\rho E) + \nabla\cdot(\rho\vecv E +\pi\vecv ) & = & 0
\end{array}
$$
with the equation of state (\ref{44}):
\begin{equation}\label{52}
\pi = \pi(\nu, e) \equiv \frac{s_\nu(\nu, e, \vecw, {\bf C})}{s_e(\nu, e, \vecw, {\bf C})},
\end{equation}
where $\vecw=\vecw(\nu, e)$ and ${\bf C}={\bf C}(\nu, e)$ are obtained by solving
$$
s_{\vecw}(\nu, e, \vecw, {\bf C}) = 0, \quad s_{\bf C}(\nu, e, \vecw, {\bf C}) = 0.
$$
These algebraic equations are globally uniquely solvable thanks to the strict concavity of the generalized specific entropy $s = s(\nu, e, \vecw, {\bf C})$ in (\ref{43}).
For $\vecv=0$, we arrive at the equilibrium states characterized with the specific entropy
\begin{equation}\label{53}
s = s(\nu, e, \vecw(\nu, e), {\bf C}(\nu, e)).
\end{equation}
The concavity of this entropy function will be shown in Theorem 5.1 below. This verifies the long-time tendency to equilibrium. Remark that the equation of state (\ref{52}) or the equilibrium entropy (\ref{53}) place a constraint on the undetermined entropy function $s=s(\nu, e, \vecw, {\bf C})$. \\

(5B). To show the compatibility, we introduce the following transformation
\begin{equation} \label{54}
U=\left( \begin{array}{l}
 u \\
 v \\
 \end{array} \right) \longrightarrow
W\equiv \left( \begin{array}{l}
 u \\
 z \\
 \end{array} \right)=\left( \begin{array}{c}
 u \\
 \eta_{v}(u, v) \\
 \end{array} \right).
\end{equation}
Thanks to the strict concavity of $\eta(U)$,
the algebraic equation $z= \eta_{v}(u, v)$ can be globally and uniquely solved to obtain $v=v(u, z)$. Namely, the transformation is globally invertible. Under this
transformation, the system (\ref{51}) for smooth solutions can be
rewritten as
\begin{equation}\label{55}
W_t+\sum_{j=1}^d(D_U W)F_j(U)_{x_j}=\frac{1}{\varepsilon}\left( \begin{array}{c}
 0 \\
\eta_{vv}(U)M(U)z \\
 \end{array} \right)
\end{equation}
with $D_UW=\left[ {\begin{array}{*{20}{c}}
   I_n & 0  \\
   \eta_{vu}(U) & \eta_{vv}(U)  \\
\end{array}} \right]$ and $I_k$ the unit matrix of order $k$. For the CDF-based fluid model (\ref{41}) and (\ref{42}), this transformed system is
$$
\begin{array}{rl}
\partial_t\rho + \nabla\cdot(\rho\vecv) & = 0, \\[3mm]
  \partial_t\vecv + \vecv\cdot\nabla\vecv - \frac{s_{\vecw\nu}}{\rho}\nabla\cdot\vecv - \frac{s_{\vecw\vecw}}{\rho}\nabla\pi  & = \frac{s_{\vecw\vecw}\mathbf{M}\vecv}{\varepsilon\rho} .
\end{array}
$$

Transformation (\ref{54}) preserves the original conservation laws in (\ref{51}):
\begin{equation}\label{56}
u_t+\sum_{j=1}^df_j(u, z)_{x_j} = 0 .
\end{equation}
Here and below, the simplified notation $f_j(u, z)$ has been used to replace $f_j(u, v(u, z))$.
The $z$-equation in (\ref{55}) can be rewritten as
$$
z =\varepsilon M(U)^{-1}\eta_{vv}(U)^{-1} \Big[z_t +
\eta_{vu}(U)\sum_j f_j(u, z)_{x_j} + \eta_{vv}(U)\sum_jg_j(u, z)_{x_j}\Big].
$$
This indicates that $z=O(\varepsilon)$. Iterating the last equation yields
$$
z =\varepsilon \bar M(u)^{-1}\bar\eta_{vv}(u)^{-1}\sum_j\big[
\bar\eta_{vu}(u) f_{ju}(u, 0) + \bar\eta_{vv}(u)g_{ju}(u,
0)\big]u_{x_j}+ O(\varepsilon^2),
$$
where $\bar M(u)=M(u, v(u, 0))$ giving the meaning of the bar.
Moreover, $f_j(u, z)$ in (\ref{56}) can be expanded into
$$
f_j(u, z) = f_j(u, 0) + f_{jz}(u, 0)z+ O(\varepsilon^2).
$$

Substituting the two truncations above into (\ref{56}), we arrive at
the following second-order PDEs
\begin{equation}\label{57}
u_t+\sum_{j=1}^3f_j(u, 0)_{x_j}=\varepsilon
\sum_{j,k=1}^3(B^{jk}(u)u_{x_k})_{x_j}
\end{equation}
with
\begin{equation}\label{58}
B^{jk}(u) = -f_{jz}(u,0)\bar M(u)^{-1}\bar
\eta_{vv}(u)^{-1}\big[ \bar\eta_{vu}(u) f_{ku}(u, 0) +
\bar\eta_{vv}(u)g_{ku}(u, 0)\big].
\end{equation}
This procedure in deriving (\ref{57}) from (\ref{55}) is called {\it Maxwell iteration} \cite{Muller_book}
and gives the exactly same result as the Chapman-Enskog expansion does \cite{Y3}.

For the CDF-based heat conduction model above, the corresponding second-order system (\ref{57}) reads as
$$
\partial_t \rho + \nabla\cdot(\rho\vecv) = 0, \quad
\vecv = - \varepsilon\mathbf{M}(\rho,\rho\vecw)^{-1}\nabla\pi ,
$$
where $\vecw =\vecw(\nu)$ is obtained by solving $s_{\vecw}(\nu, \vecw)=0$. Recall that $\pi=s_\nu(\nu, \vecw)$. This is the classical porous medium equation
$$
\partial_t \rho = \varepsilon\nabla\cdot(\rho^m\nabla \rho),
$$
provided that the dissipation matrix and entropy are taken as, for example,
$$
\mathbf{M}(\rho,\rho\vecw)=2\rho^{2-m}\vecI \quad \mbox{and} \quad s(\nu,\vecw) =- \nu^{-1} + h(\vecw)
$$
with $m$ a real number and $h(\vecw)$ strictly concave.

\begin{rem}
From such choices of $\mathbf{M}(\rho,\rho\vecw)$ and $\eta(\rho, \rho\vecw)=\rho s(1/\rho, \vecw)$, we see that the compatibility and long-time tendency analysis provide significant constraints on the freedoms of CDF: the strict concave entropy function and positive-definite dissipation matrix. A similar analysis for the CDF-based one-component fluid system (\ref{11}) and (\ref{45}) (or (\ref{46})) can also be done but is a bit tedious.
\end{rem}

Denote by $W^\varepsilon_h\equiv (u^\varepsilon_h, z^\varepsilon_h)$ and $u^\varepsilon_p$ the solutions to
systems (\ref{55}) and (\ref{57}), respectively. It was proved in \cite{YangY} for sufficiently smooth initial data that the expansion
$$
u^{\varepsilon}_h -u^{\varepsilon}_p = O(\varepsilon^2)
$$
holds in a certain Sobolev space. This result indicates that the second-order PDEs (\ref{57}) is a good approximation to the CDF-based first-order system (\ref{51}) when the dissipative variables evolve much faster than the conserved ones. Consequently, the Maxwell iteration makes up for the shortcoming of CDF that only first-order PDEs can be obtained. \\

(5C). The CDF-based second-order PDEs (\ref{57}) possess the following important structural properties.

\begin{thm} (\cite{YangY}) Set $\hat\eta(u, z) := \eta(u, v(u, z))$. Then $\hat\eta (u):=\hat\eta(u, 0)$ is strictly concave with respect to $u$
under consideration, $\hat\eta_{uu}(u)f_{ju}(u, 0)$ is symmetric, and the CDF-based second-order PDEs (\ref{57}) can be rewritten as
\begin{equation}\label{59}
u_t+\sum_{j=1}^3f_j(u, 0)_{x_j}=-\varepsilon
\sum_{j,k=1}^3({\tilde B}^{jk}(u)({\hat\eta}_u(u))_{x_k})_{x_j}
\end{equation}
with
\begin{equation}\label{510}
{\tilde B}^{jk}(u) = f_{jz}(u, 0)\bar M(u)^{-1}f_{kz}(u,0)^T.
\end{equation}
\end{thm}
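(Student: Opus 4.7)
The plan is to establish the three assertions of Theorem~5.1 in order: strict concavity of $\hat\eta(u) := \hat\eta(u,0)$, symmetry of $\hat\eta_{uu}(u) f_{ju}(u,0)$, and the rewriting (5.9)--(5.10) of (5.7)--(5.8). Throughout, write $\bar v(u) := v(u, 0)$, defined by $\eta_v(u, \bar v(u)) = 0$, and place a bar over any derivative of $\eta$, $f_j$, $g_j$, $M$ to denote evaluation at $(u, \bar v(u))$.

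For the concavity, implicit differentiation of $\eta_v(u, \bar v(u)) = 0$ gives $\bar v_u(u) = -\bar\eta_{vv}^{-1}\bar\eta_{vu}$. Since $\bar\eta_v = 0$, the chain rule gives $\hat\eta_u(u) = \bar\eta_u$ and then
$\hat\eta_{uu}(u) = \bar\eta_{uu} + \bar\eta_{uv}\bar v_u = \bar\eta_{uu} - \bar\eta_{uv}\bar\eta_{vv}^{-1}\bar\eta_{vu},$
which is the Schur complement of the negative-definite block $\bar\eta_{vv}$ in the negative-definite Hessian $\eta_{UU}$. Hence $\hat\eta_{uu}$ is negative definite and $\hat\eta$ is strictly concave.

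For the symmetry of $\hat\eta_{uu}(u) f_{ju}(u, 0)$, I introduce the entropy flux $J_j$ from (3.3) and set $K_j(u) := J_j(u, \bar v(u))$. From $J_{jU} = \eta_U F_{jU}$, evaluating at $v = \bar v(u)$ (where $\eta_v = 0$) yields $J_{ju}|_{(u,\bar v)} = \bar\eta_u\bar f_{ju}$ and $J_{jv}|_{(u,\bar v)} = \bar\eta_u\bar f_{jv}$. The chain rule then gives $K_{ju}(u) = \bar\eta_u[\bar f_{ju} + \bar f_{jv}\bar v_u] = \hat\eta_u(u) f_{ju}(u, 0)$ in the simplified notation $f_j(u, z) := f_j(u, v(u, z))$. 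Because $K_j$ is scalar-valued, its Hessian $K_{juu}$ is automatically symmetric. Expanding $K_{juu} = \hat\eta_{uu}(u) f_{ju}(u, 0) + \hat\eta_u(u) f_{juu}(u, 0)$ and noting that the third-order piece $\hat\eta_u f_{juu}$ is symmetric in its two lower $u$-indices, the symmetry of $\hat\eta_{uu}(u) f_{ju}(u, 0)$ follows.

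For the equivalence (5.7)$\Leftrightarrow$(5.9), the chain rule $(\hat\eta_u(u))_{x_k} = \hat\eta_{uu}(u)\, u_{x_k}$ reduces the claim to the matrix identity $B^{jk}(u) = -\tilde B^{jk}(u)\hat\eta_{uu}(u)$ for each $j,k$. Substituting $f_{jz}(u, 0) = \bar f_{jv}\bar\eta_{vv}^{-1}$ and cancelling the common left factor $-\bar f_{jv}\bar\eta_{vv}^{-1}\bar M^{-1}\bar\eta_{vv}^{-1}$ further reduces this to
$\bar\eta_{vu} f_{ku}(u, 0) + \bar\eta_{vv}\, g_{ku}(u, 0) = \bar f_{kv}^T \hat\eta_{uu}(u).$
Expanding $f_{ku}(u,0) = \bar f_{ku} + \bar f_{kv}\bar v_u$ and $g_{ku}(u,0) = \bar g_{ku} + \bar g_{kv}\bar v_u$, then applying the two off-diagonal block-symmetry relations implied by condition~(i), namely
$\bar\eta_{vu}\bar f_{ku} + \bar\eta_{vv}\bar g_{ku} = \bar f_{kv}^T\bar\eta_{uu} + \bar g_{kv}^T\bar\eta_{vu}$
and
$\bar\eta_{vu}\bar f_{kv} + \bar\eta_{vv}\bar g_{kv} = \bar f_{kv}^T\bar\eta_{uv} + \bar g_{kv}^T\bar\eta_{vv},$
together with $\bar\eta_{vv}\bar v_u = -\bar\eta_{vu}$, both sides collapse to $\bar f_{kv}^T \hat\eta_{uu}$.

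The main obstacle is this last collapse: one has to keep clean notation separating derivatives taken in the original $(u, v)$ coordinates (the barred quantities) from those of the composite functions $f_j(u,\bar v(u))$, $g_j(u,\bar v(u))$, and invoke exactly the right block-symmetry relation of condition~(i) at each cancellation. Once the bookkeeping is organized, the manipulations are purely mechanical, and the three assertions combine to give (5.9)--(5.10).
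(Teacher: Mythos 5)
Your proof is correct, and it reaches the paper's key identity by a different route. The paper works in the transformed variables $W=(u,z)$ of (\ref{54})--(\ref{55}) and derives all three claims from one structural fact: $A_0(W)=-(D_WU)^T\eta_{UU}D_WU$ is a block-diagonal Friedrichs symmetrizer of (\ref{55}), so that $\hat\eta_{WW}(u,0)=-A_0(u,0)$ yields the concavity, while the symmetry of $A_0D_UWF_j(U(W))_W$ yields both the symmetry of $\hat\eta_{uu}(u)f_{ju}(u,0)$ and the identity (\ref{511}) --- which is exactly your reduced identity $\bar\eta_{vu}f_{ku}(u,0)+\bar\eta_{vv}g_{ku}(u,0)=\bar f_{kv}^T\hat\eta_{uu}(u)$ left-multiplied by $\hat\eta_{zz}(u,0)=\bar\eta_{vv}^{-1}$. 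You instead stay in the original $(u,v)$ coordinates on the equilibrium manifold $v=\bar v(u)$: your Schur-complement formula for $\hat\eta_{uu}$ is the $uu$-block of the paper's conjugated Hessian written out explicitly; your derivation of the symmetry of $\hat\eta_{uu}f_{ju}(u,0)$ from the restricted entropy flux $K_j(u)=J_j(u,\bar v(u))$ is a genuinely different (classical Godunov/Friedrichs--Lax) argument that the paper does not use, with the pleasant byproduct that $(\hat\eta,K_j)$ is an entropy pair for the equilibrium system; and your block-by-block collapse using the two off-diagonal consequences of condition (i) together with $\bar\eta_{vv}\bar v_u=-\bar\eta_{vu}$ is a correct hands-on substitute for the paper's appeal to the block-diagonality of $A_0$. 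The trade-off is that the paper's argument is more compact --- one symmetrizer computation delivers everything --- while yours is self-contained at the level of condition (i), never introduces $A_0(W)$, and makes the bookkeeping between barred derivatives and derivatives of the composites fully explicit. One cosmetic point: ``cancelling the common left factor'' should be read as ``it suffices to prove the identity with the factor removed,'' since that factor need not be invertible; as your argument does prove the stronger unfactored identity, the logic is sound.
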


\begin{proof}
First of all, the symmetric positive-definite matrix
$$
A_0(W) := -(D_WU)^T\eta_{UU}(U(W))D_WU
$$
is a symmetrizer in the sense of Friedrichs \cite{Dafermos} for the quasilinear system (\ref{55}). Namely, $A_0(W)D_U WF_j(U(W))_W$ is symmetric. Since
$$
D_WU= (D_UW)^{-1}=\left[ {\begin{array}{*{20}{c}}
I_n & 0  \\
 -\eta_{vv}^{-1}\eta_{vu} & \eta_{vv}^{-1}  \\
\end{array}} \right],
$$
it is direct to verify that the symmetrizer is a block-diagonal matrix with the above partition.

Notice that $\hat\eta(W) = \hat\eta(u, z).$ We compute
$$
\hat \eta_{WW}(u, 0) =(D_WU)^T\eta_{UU}D_WU = -A_0(u, 0)
$$
at equilibrium where $q=\eta_v(U)=0$. This indicates the negative-definitenss of $\hat\eta_{uu}(u, 0)$ and thereby the strict concavity of $\hat\eta(u, 0)$.
In addition, since $A_0(W)$ is block-diagonal and $A_0(W)D_U WF_j(U(W))_W$ is symmetric, it follows immediately from  the expression of $D_U WF_j(U(W))$ that $\hat\eta_{uu}(u, 0)f_{ju}(u, 0)$ is symmetric and
\begin{equation}\label{511}
\hat\eta_{zz}(u, 0)[\bar\eta_{vu}(u)f_{ju}(u, 0)+\bar\eta_{vv}(u)g_{ju}(u, 0)]=[\hat\eta_{uu}(u, 0)f_{jz}(u, 0)]^T.
\end{equation}
Moreover, a direct calculation shows that
$$
\hat \eta_{zz}(u, 0)= \eta_{vv}^{-1}(U)\big|_{z=0} =\bar\eta_{vv}(u)^{-1} .
$$
This together with (\ref{511}) leads to the following relation
$$
B^{jk}(u) = -f_{jz}(u, 0)\bar M(u)^{-1}\big{[}\hat\eta_{uu}(u, 0)f_{kz}(u,0)\big{]}^T.
$$
The theorem follows from this relation.
\end{proof}

Theorem 5.1 exposes the gradient structure \cite{HMie} of the CDF-based second-order PDEs (\ref{59}). When the dissipation matrix $\bar M(u)$ is symmetric, it also indicates that
$$
{\tilde B}^{jk}(u) = ({\tilde B}^{kj}(u))^T.
$$
Namely, the $(3n)\times(3n)$-matrix
$$
 \left[ {\begin{array}{*{20}{c}}
{\tilde B}^{11}(u) & {\tilde B}^{12}(u) & {\tilde B}^{13}(u) \\
{\tilde B}^{21}(u) & {\tilde B}^{22}(u) & {\tilde B}^{23}(u) \\
{\tilde B}^{31}(u) & {\tilde B}^{32}(u) & {\tilde B}^{33}(u)
\end{array}} \right]
$$
is symmetric. Such a matrix corresponds to a nonlinear reciprocal relation between the thermodynamic flux and (entropic) force $((\hat\eta_{u}(u))_{x_1}, (\hat\eta_{u}(u))_{x_2}, (\hat\eta_{u}(u))_{x_3})$ for vectorial or tensorial irreversible processes \cite{GM}.

Furthermore, we have

\begin{thm} (\cite{YangY})
The CDF-based second-order PDEs (\ref{59}) is strongly dissipative \cite{Dafermos, serre}. Namely, there exists $c_0=c_0(u)>0$ such that
$$
\sum\limits_{j,k=1}^3\xi_j^T{\tilde B}^{jk}(u)\xi_k\geq
c_0(u)\sum\limits_{j=1}^3\Big|\sum\limits_{k=1}^3{\tilde B}^{jk}(u)\xi_k\Big|^2
$$
holds for any $\xi_j\in{\bf R}^n (j=1, 2, 3)$.
\end{thm}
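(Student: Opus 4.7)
The plan is to exploit the explicit factored form of $\tilde B^{jk}(u)$ given by (\ref{510}) in order to reduce the matrix inequality to a single vector estimate, then close the estimate using the positive-definiteness of $\bar M(u)$. Abbreviate $F_j := f_{jz}(u, 0)$, so that $\tilde B^{jk}(u) = F_j \bar M(u)^{-1} F_k^T$, and introduce the auxiliary vector $Y := \sum_{k=1}^3 F_k^T \xi_k \in \mathbb{R}^r$. Both sides of the claimed inequality then depend on $\xi$ only through $Y$:
$$
\sum_{j,k=1}^3 \xi_j^T \tilde B^{jk}(u)\xi_k = Y^T \bar M(u)^{-1} Y, \qquad \sum_{k=1}^3 \tilde B^{jk}(u)\xi_k = F_j \bar M(u)^{-1} Y.
$$

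The next step is to bound $Y^T \bar M(u)^{-1} Y$ from below. The subtlety is that $\bar M(u)$ is not assumed symmetric, so $\bar M(u)^{-1}$ is not literally a positive-definite quadratic form. I would handle this by the same maneuver already used in deriving (\ref{35}): substitute $Z := \bar M(u)^{-1} Y$, so $Y = \bar M(u) Z$, and write
$$
Y^T \bar M(u)^{-1} Y \;=\; Z^T \bar M(u)^T Z \;=\; Z^T \bar M(u) Z \;\geq\; \lambda(u)\,|Z|^2,
$$
where the middle equality uses that a scalar equals its own transpose and the last step invokes positive-definiteness of the symmetric part $[\bar M(u)+\bar M(u)^T]/2$, with $\lambda(u)>0$ its smallest eigenvalue.

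For the right-hand side of the theorem, sub-multiplicativity of the matrix $2$-norm gives
$$
\sum_{j=1}^3 \Big|\sum_{k=1}^3 \tilde B^{jk}(u)\xi_k\Big|^2 \;=\; \sum_{j=1}^3 |F_j Z|^2 \;\leq\; C(u)\,|Z|^2, \qquad C(u) := \sum_{j=1}^3 |F_j|^2.
$$
Combining this with the preceding lower bound, the choice $c_0(u) := \lambda(u)/C(u)$ works whenever $C(u)>0$. The degenerate case $C(u)=0$ forces every $F_j$, and hence every $\tilde B^{jk}$, to vanish, so both sides of the claim are identically zero and any positive constant serves.

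The main obstacle is precisely the non-symmetry of $\bar M(u)$: without the substitution $Z = \bar M(u)^{-1} Y$ one cannot read off a useful lower bound from $\bar M(u)^{-1}$ directly, and a naive attempt to bound $Y^T \bar M^{-1} Y$ by $|Y|^2$ leads to the wrong side of the inverse. Once the substitution is in place, the remainder is the routine Cauchy--Schwarz/sub-multiplicativity chain already previewed in (\ref{35}), and no further difficulty is expected.
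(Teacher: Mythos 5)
Your proof is correct and follows essentially the same route as the paper's: factor $\tilde B^{jk}(u)=f_{jz}(u,0)\bar M(u)^{-1}f_{kz}^T(u,0)$, collapse both sides onto the single auxiliary vector $\sum_k f_{kz}^T(u,0)\xi_k$, bound the quadratic form below by positive definiteness and the right-hand side above by operator norms, and take $c_0$ as the ratio of the two constants. The only difference is cosmetic but welcome: by substituting $Z=\bar M(u)^{-1}Y$ you cover a non-symmetric positive-definite $\bar M(u)$ cleanly, whereas the paper's write-up phrases the lower bound via ``the largest eigenvalue of the symmetric positive-definite matrix $\bar M(u)$,'' implicitly assuming symmetry.
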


\begin{proof}
With (\ref{510}), we can write
$$
\begin{array}{rl}
\sum\limits_{j,k=1}^3\xi_j^T{\tilde B}^{jk}(u)\xi_k&=\sum\limits_{j,k=1}^3\xi_j^Tf_{jz}(u,0)\bar M(u)^{-1}f_{kz}^T(u,0)\xi_k\\
&=\Big{(}\sum\limits_{j=1}^3f_{jz}^T(u,0)\xi_j\Big{)}^T\bar M(u)^{-1}\Big{(}\sum\limits_{k=1}^3 f_{kz}^T(u,0)\xi_k\Big{)}\\
&\geq \delta_1(u)\Big|\sum\limits_{k=1}^3f_{kz}^T(u,0)\xi_k\Big|^2
\end{array}
$$
with $\delta_1(u)$ the largest eigenvalue of the symmetric positive-definite matrix $\bar M(u)$.
On the other hand, there exists $\delta_2(u)>0$ such that
$$
\begin{array}{rl}
\sum\limits_{j=1}^3|\sum\limits_{k=1}^3{\tilde B}^{jk}(u)\xi_k|^2&
=\sum\limits_{j=1}^3\Big|\sum\limits_{k=1}^3f_{jz}(u,0)\bar M(u)^{-1}
f_{kz}^T(u,0)\xi_k\Big|^2\\
&=\sum\limits_{j=1}^3\Big|f_{jz}(u,0)\bar M(u)^{-1}\Big{(}\sum\limits_{k=1}^3f_{kz}^T(u,0)\xi_k\Big{)}\Big|^2\\
&\leq \delta_2(u)\Big|\sum\limits_{k=1}^3f_{kz}^T(u,0)\xi_k\Big|^2.
\end{array}
$$
Hence the proof is complete by taking $c_0(u)=\delta_1(u)/\delta_2(u)$.
\end{proof}

It is worthwhile to comment that Theorems 5.1 and 5.2 correspond to the conservation-dissipation principle and the entropy-production inequality (\ref{35}). They indicate that the Maxwell iteration preserves both the principle and the inequality. The interested reader is referred to \cite{Y3} for other properties preserved by the iteration.

\section{Further comments}
\setcounter{equation}{0}

This section contains some further comments on CDF, including its limitations.

First of all, CDF assumes that certain conservation laws are given {\it a priori}. This particularly means that CDF accommodates other possibly existing conservation laws rather than only that of energy.
Notice that conservation laws usually root in first principles and therefore should be respected fully.

Numerical methods respecting conservation laws have been developed in \cite{luc, xu1, xu2} for rarefied gas flows. The idea of CDF has also been used in design of numerical methods \cite{luc} based on the Boltzmann equations. Moreover, CDF-based models are nothing but traditional PDEs. They are very amenable to existing numerics.

CDF-based PDEs have the conservative form in (\ref{32}), which allows a convenient definition of weak solutions. Let $F_j=F_j(U) (j=1, 2, 3)$ and $Q=Q(U)$ be continuous with respect to $U$. A vector-valued locally bounded function $U=U(x, t)$ is called a weak solution to balance laws (\ref{32}) if
the following equality
$$
\int_0^\infty\int_{x\in\mathbb{R}^3}\Big[ U\phi_t + \sum_{j=1}^3F_j(U)\phi_{x_j} + Q(U)\phi\Big]dxdt =0
$$
holds for any smooth function $\phi=\phi(x, t)$ with compact support \cite{Dafermos, serre}. This is a natural generalization of the definition of classical (or continuously differentiable) solutions. It fully utilizes the conservative form of the flux term $F_j(U)_{x_j}$. If the flux term is replaced with a term of form $A_j(U)U_{x_j}$, how to define non-classical solutions has been an open problem \cite{Dafermos, serre}. It should be mentioned that weak solutions cannot be evaded in studying interface or other problems where discontinuous solutions like shock waves arise.

Another advantage of the conservative form is related to the hyperbolicity. For a system of conservation laws with a concave entropy, it is well-known that the Hessian matrix of the entropy symmetrizes the system in the sense of Friedrichs \cite{FL_71, Dafermos}. Consequently, the system is symmetrizable hyperbolic. Thus, if a system of conservation laws has a concave entropy, its hyperbolicity and thereby well-posedness follows automatically.

When nonconservative terms exist, the above conclusion does not hold in general.
To see this, we consider the GENERIC-based compressible model for complex fluids proposed in \cite{BET_book}.
The model uses an evolution equation for a configuration tensor and takes the stress as a function of the latter.
It obeys both the principle of material indifference and the second law of thermodynamics.
In the isothermal case, the model reads as
\begin{equation}\label{61}
  \begin{array}{rl}
  \partial_t \rho + \nabla \cdot (\rho \mathbf{v}) & = 0, \\[3mm]
  \partial_t (\rho \mathbf{v}) + \nabla \cdot (\rho \mathbf{v}\otimes\mathbf{v} + p\mathbf{I}) + \nabla \cdot \pi & =0, \\[3mm]
  \partial_t \mathbf{c} + \mathbf{v} \cdot \nabla \mathbf{c} + (\nabla \mathbf{v})  \mathbf{c }+ \mathbf{c}  (\nabla \mathbf{v})^T -(\nabla \mathbf{v} +(\nabla \mathbf{v})^T ) & = -\frac{\mathbf{c}}{\tau}. \label{11-3}
\end{array}
\end{equation}
Here the first two equations are the standard conservation laws (of mass and momentum) and the third one is a constitutive equation for the configuration tensor $\mathbf{c}$.
In (\ref{61}),
$p=p(\rho)$ is the fluid pressure,
$\pi=2\mathbf{c} (\mathbf{c}-\mathbf{I})-\frac{1}{2}\mbox{tr}(\mathbf{c}^2)\mathbf{I}$ is the stress tensor with $\mbox{tr}(\mathbf{c}^2) = \sum_{i,j}c_{ij}c_{ji}$, $\nabla \mathbf{v}=[ \partial_{x_j} v_i]$, and $\tau$ is a relaxation time related to the viscosity of fluid. Notice that this model consists of only first-order PDEs and contains non-conservative terms like $(\nabla \mathbf{v})\mathbf{c }$ in (\ref{11-3}) due to the Oldroyd derivative.

For (\ref{61}), it is easy to check that the following function from \cite{BET_book, Y6}:
$$
  \eta = \eta(\rho,\rho \mathbf{v} ,\mathbf{c}) = -\rho \int^\rho \frac{p(z)}{z^2} dz - \frac{(\rho \mathbf{v})^2}{2\rho} - \frac{1}{2} \mbox{tr}(\mathbf{c}^2)
  $$
is strictly concave with respect to $U$ under the physically reasonable assumption $p_{\rho} \ge 0$. Moreover,
with the equations in (\ref{61}) we can compute
$$
\begin{array}{rlrl}
  \eta_t &=& \eta_{\rho} \rho_t + \eta_{\rho \mathbf{v}} \cdot \mathbf{v}_t + \eta_\mathbf{c} : \mathbf{c}_t \nonumber\\
  &=& -\nabla \cdot (\eta\mathbf{v}-p \mathbf{v}- \mathbf{\pi}\mathbf{v}) + \frac{1}{\tau}\mathbf{c}:\mathbf{c} \\
  &\equiv& -\nabla \cdot J(U) + \sigma .\nonumber
\end{array}
$$
Namely, $\eta=\eta(\rho,\rho \mathbf{v}, \mathbf{c})$ defined above is a local entropy with the entropy flux $J(U)$ and the local entropy production rate $\sigma$. The non-negativeness of $\sigma=\frac{1}{\tau}\mathbf{c}:\mathbf{c}\geq0$ guarantees the second law of thermodynamics.

However, it was shown in \cite{HuoY} that the Hessian matrix of the entropy function $\eta=\eta(\rho,\rho \mathbf{v}, \mathbf{c})$ does not symmetrize the system of first-order PDEs in (\ref{61}) in both one- and two-dimensional cases. In the two-dimensional case, the symmetrizable hyperbolicity of the system (\ref{61}) remains unclear. Thus, the well-posedness (hyperbolicity) of the GENERIC-based PDEs (\ref{61}) needs to be examined.

Although CDF has many advantages, it has its limitations. For example, CDF-based models do not contain constitutive equations with nonlocal terms. In addition, CDF's conservative form is inconsistent to the principle of material indifference, which lead to the typical non-conservative term $(\nabla \mathbf{v})\mathbf{c}$ in (\ref{11-3}). So far, it is not clear how to remedy the CDF so that the principle of material indifference is incorporated. A possibly reasonable alternative to the conservation-dissipation principle would be the structural stability conditions proposed in \cite{Y1, Y2} for hyperbolic systems of quasilinear PDEs with relaxation. The structural stability conditions are weaker than the conservation-dissipation principle but can also guarantee the four fundamental requirements.

At this point, let me mention the works of the Godunov school (currently in \cite{PR4} called the Symmetric Hyperbolic Thermodynamically Compatible framework --- SHTC). It was suggested explicitly in \cite{PR2} to model dissipative processes by lower order terms of the associated hyperbolic systems of equations. In the past several decades, various special systems of first-order PDEs, including non-conservative evolution equations, for specific physical processes have been constructed. Examples inculde a system of 10 equations in \cite{PR1} for nonlinear elastic processes, a system of 4 equations in \cite{PR2} for heat transfer, a system of 14 equations in \cite{PR5} for viscous fluid flows, etc. However, there seems no a reference where the SHTC framework
is defined clearly (as clearly as GENERIC \cite{BET_book} or as clearly as our CDF \cite{ZHYY}). On the other hand, it is known that mere hyperbolicity is not enough to prevent inherently instability of a PDE system. An example is the well-known BISQ model in geophysics. It was shown in \cite{Jiawei} that the hyperbolic system allows time-exponentially exploding solutions. Therefore, the crucial point in this context is how to couple the lower-order terms with the hyperbolic parts. This point seems missing in the papers of the school. Consequently, it remains unclear whether or not the special systems constructed by the school meet the
fundamental requirements. And it would be interesting to examine whether or not the special systems satisfy the conservation-dissipation principle or structural stability conditions \cite{Y1, Y2}, whose necessity was expounded in \cite{Y2}.}

Finally, I mention a possible combination of CDF with machine learning. CDF provides a unified framework for modeling irreversible processes endowed with conservation laws. It has two problem-dependent freedoms. For specific problems, the freedoms may be fitted through machine leaning. A seemingly feasible project would be to fix the freedoms for rarefied gas dynamics. As is well-known, the Boltzmann equation is a well-accepted model for rarefied gas dynamics but computationally costly. A reliable and unified macroscopic model is desirable. For that purpose, various moment closure systems have been developed and tested (see, e.g., \cite{Muller_book}). A frequent question is how many moments are needed. The most possible answer would be 14, once  the freedoms $\eta(U)$ and $M(U)$ (46 functions of $U\in {\bf R}^{14}$) in the one-component fluid system (\ref{11}) and (\ref{46}) are fixed. Remark that a great number of reliable solutions (data) of the Boltzmann equation are available, which have been obtained through various numerical methods or experiments. Those data may be used to fix the freedoms, with the aim to obtain a unified macroscopic model for rarefied gas flows.

\section{Summary}
\setcounter{equation}{0}

This section collects the main conclusions of this paper. I propose four fundamental requirements for establishing evolution PDEs modeling irreversible processes. These requirements are the observability of irreversible processes, irreversibility, long-time tendency to equilibrium, and compatibility with the possibly existing well-validated theories. I show that the CDF-based PDEs meet all the four requirements.

In showing the compatibility, the Maxwell iteration is applied to the CDF-based first-order PDEs and, consequently,  second-order PDEs are derived systematically. The analysis not only reveals significant constraints on the freedom of CDF, it also remedies the shortcoming of CDF that only first-order PDEs can be derived. It is proved that the derived second-order PDEs preserve the gradient structure and strong dissipativeness of the first-order ones.

The strong dissipativeness is characterized with a refined formulation of the second law of thermodynamics, which is the entropy-production inequality (\ref{35}) and Theorem 5.2 for the first-order and second-order PDEs, respectively. The inequality follows from the conservation-dissipation principle---the basis of CDF. On the other hand, the gradient structure corresponds to nonlinear Onsager reciprocal relations for scalar processes (described with the first-order PDEs) and for vectorial or tensorial processes (described with the second-order PDEs).

\end{document}